\newtheorem{lemma}{Lemma}
\newtheorem{thm}{Theorem}
\def\BibTeX{{\rm B\kern-.05em{\sc i\kern-.025em b}\kern-.08em
    T\kern-.1667em\lower.7ex\hbox{E}\kern-.125emX}}
\begin{document}

\title{Nonuniform Quantized Decoder for Polar Codes with Minimum Distortion Quantizer\\
}

\author{Zhiwei~Cao, Hongfei~Zhu, Yuping~Zhao, Dou~Li \\ 
	School of Electronics Engineering and Computer Science\\
	Peking University, Beijing, 100871, China\\
	Email:\{cao\_zhiwei, zhuhongfei, yuping.zhao, lidou\}@pku.edu.cn
	\thanks{This work was financially supported in part by the National Key Research and Development Program of China under the Grant No. 2020YFB1807802, 2016ZX03001018-005 and in part by Huawei Technologies Co., Ltd.}}

\maketitle

\begin{abstract}
We propose a nonuniform quantized decoder for polar codes. The design metric of the quantizers is to minimize the distortion incurred by quantization. The quantizers are obtained via dynamic programming and the optimality of the quantizer is proved as well. Simulation results show that  the error correction performance degradation of the proposed nonuniform quantized decoder is less than 0.1 dB compared to conventional float-point decoders under 5-bit nonuniform quantization, outperforming previous uniform quantized decoders for polar codes significantly.
\end{abstract}

\begin{IEEEkeywords}
Polar codes, quantized decoders, minimum distortion, density evolution, dynamic programming
\end{IEEEkeywords}
\section{Introduction}
\IEEEPARstart{P}{olar} codes\cite{ArikanPolar} are the first error-correcting codes which can achieve symmetric capacity of memoryless channls with an explicit structure and low decoding complexity. Traditional float-point decoders for polar codes, namely successive cancellation (SC) decoding\cite{ArikanPolar} and successive cancellation list (SCL) decoding\cite{SCL} require a huge amount of float point arithmetics, which increases the hardware complexity and energy consumption drastically. 

It is well-known that quantization is a critical issue in hardware implementation of decoders. Many quantization methods have been proposed in order to implement low complexity decoders for polar codes.  In \cite{RobustQuantization}, the authors analyze the robustness of polar codes with a quantized decoder and found that polar codes are robust to quantization. In \cite{UniformQuantizationSCL}, a uniform quantizer with different design metric such as minimizing squared errors is developed. As a result, 6-bit uniform quantized SC decoder yields near float-point performance.  An adaptive quantizer is proposed in \cite{TerabitsPerSecondDecoder} and achieve terabits per second throughputs. 

The quantization methods mentioned above all deal with the quantization of SC decoder, which is less practical in real world. In \cite{QuantizationCASCL}, uniform quantization of channel log-likelihood ratio (LLR) and internal LLRs are studied. The authors claimed that 6-bit quantization for channel LLRs and 8-bit quantization for internal LLRs yields negligible performance loss. In order to study the error correction performance of SCL decoders under very coarsely quantized cases, the authors in \cite{TernaryQuantized} focused on the case where the decoder messages are quantized into 3 levels. Lookup tables are designed heuristically for performing SC decoding inside the decoder. 

Exising works on the design of the quantized decoder for polar codes mainly focus on uniform quantization or designing the quantization mapping of which the optimality is not guaranteed. In this letter, motivated by the fact that conventional float-point decoders for polar codes are all based on LLRs, we proposed a nonuniform quantized decoder for polar codes with minimum distortion quantizers for both channel LLRs and internal LLRs, aiming to minimize the squared error, which is similar to the case studied in the Lloyd algorithm\cite{lloyd}. However, some stark differences distinguish the proposed approach from Lloyd algorithm. First of all, we aim to design a quantizer for a discrete random variable instead of a continuous random variable. Secondly, the principal of proposed quantizer design algorithm is to solve a combinatorial optimization problem via dynamic programming instead of alternating minimizing used in Lloyd algorithm. Most importantly, it is neccssary to obtain the analytical expression of the probability distribution of internal LLRs if we wish to apply Lloyd algorithm to design quantizers, which is intractable in the context of polar codes. 

 We first provide some background knowledge in Section \ref{Section_Preliminaries}. Then we present the system model in Section \ref{Section_SystemModel}. Subsequently, we define the concept of quantized density evolution with the aid of quantizers in Section \ref{QuantizedDensityEvolution}. Later on, the proposed nonuniform quantizer for discrete LLR distribution that minimizes distortion is derived in Section \ref{MMSEQuantizer} and the  optimality of the proposed minimum distortion quantizer is discussed as well. SC and SCL decoding processing using the precomputed quantizers are discussed in Section \ref{QDecode} in the following. Simulation results are illustrated in Section \ref{Simulation} and concluding remarks are given in Section \ref{conclusion} ultimately.
\section{Preliminaries}\label{Section_Preliminaries}
\subsection{Polar Codes}
To construct a $(N, K)$ polar code $\mathcal{P}(N, K)$, $K$ information bits $u_{\mathcal{A}}$ and $N-K$ frozen bits $u_{\mathcal{A}^c}$ are assigned to the $K$ reliable bits and $N-K$ unreliable bits respectively, where $\mathcal{A}$ is the information bits set and  $\mathcal{A}^c$ is the frozen bits set. $N - K$ frozen bits are set to the predefined value. The codeword can be obtained by multiplying source bits $u_1^N$ with generator matrix $G_N$
\begin{align}
x_1^N = u_1^NG_N = u_1^NF_2^{\otimes n}
\end{align}

where, $n = \log_2N$, $F_2^{\otimes n}$ is the $n$-th Kronecker product of the polarization matrix 
\begin{align*}
{F}_2 = \left[ {\begin{array}{*{20}{c}} 
	1&0\\
	1&1
	\end{array}} \right] 
\end{align*}
\vspace{-0.5cm}
\subsection{Successive Cancellation Decoding}\label{Pre-SC}
Denote the LLR of the channel outpt $y_j$ as $\alpha_{0, j}$ and the $j$-th internal LLR in the $i$-th decoding stage is $\alpha_{i, j}$, $i=1,2,...,n=\log_2N, j=1,2,...,N$. In \cite{ArikanPolar}, the LLR of a message bit is defined as  
\begin{equation}\label{LLR_message_bits}
	\alpha_{n, i} = \ln \frac{P(\bm{y}, \hat{u}_1, \hat{u}_2, ..., \hat{u}_{i-1}|u_i=0)}{P(\bm{y},  \hat{u}_1, \hat{u}_2, ..., \hat{u}_{i-1}|u_i=1)}
\end{equation} 
For SC decoding, internal LLRs can be computed in the following recursive and hardware-friendly way
\begin{align} 
	\label{f} \alpha_{i, j} = \text{sign}(\alpha_{i - 1, j})\cdot\text{sign}(\alpha_{i, j + N/2^i})\cdot \\ \notag \min\{|\alpha_{i - 1, j}|, |\alpha_{i - 1, j + N/2^i}|\}\\
	\label{g} \alpha_{i, j+N/2^i} = (-1)^{\hat{u}_{i, j}}\cdot \alpha_{i - 1, j} + \alpha_{i - 1, j+N/2^i}
\end{align}
where $\hat{u}_{i, j}$ is the hard decision result for internal bit $u_{i, j}$. (\ref{f}) and (\ref{g}) are often referred as $f$ and $g$ function in the literature.

 When it comes to decoding a bit, according to its LLR and whether it is a frozen bit, using the following rule to get the estimation
\begin{align}\label{SCDecision}
\hat{u}_i =\begin{cases}
0,  \ \  &\text{if} \  \alpha<0 \ \text{or} \ i\in\mathcal{A}^c \\
1,  \ \  &\text{otherwise}
\end{cases}	
\end{align}
\vspace{-0.5cm}
\subsection{Successive Cancellation List Decoding} \label{Pre-SCL}
SCL decoding\cite{SCL} constains $L$ concurrent SC decoders . Improved error correction performance is available by saving multiple different decoding paths and selecting one best path after the depth-first traversal.
The quality of a decoding path is evaluated by path metric (PM)\cite{LLR-SCL}.  At leaf nodes $i$, $\hat{u}_i$ is estimated as 0 or 1, and the PM of these two results are updated as 
\begin{align}\label{PM-HWF}
\begin{split}
\text{PM}_{0}^l &= 0 \\
\text{PM}_{i}^l &=\begin{cases}
\text{PM}^l_{i-1} + |\alpha_i^l|, \ \  &\text{if} \ \hat{u}_i^l \neq \frac{1}{2}(1-\text{sgn}(\alpha_i^l)), \\ 
\text{PM}^l_{i-1},  \ \  &\text{otherwise}
\end{cases}
\end{split}
\end{align}
Among the $2L$ candidates, $L$ candidates with lowest PM values are selected in order to avoid the exponential increasing complexity.
\vspace{-0.5cm}
\section{System Model}\label{Section_SystemModel}
The system model in this letter is illustrated in Fig. \ref{SystemModel}. The message vector in the transmitter $u_{\mathcal{A}}$ is first encoded by the nonsystematic polar encoder to obtain the codeword $x_1^N$. Code construction is accomplished by beta-expansion\cite{BetaExpansion}. The codeword is modulated with binary phase shift keying (BPSK) modulator, which yields symbol vector $s_1^N$. The modulated symbols are transmitted through the additive white Gaussian noise (AWGN) channel with noise variance $\sigma^2$. For the AWGN channel, the received soft value $y$ is first transformed to LLR soft value $m$ by the LLR converter in the receiver
\begin{align}
	m = LLR(y) = \ln\frac{p(y|u=0)}{p(y|u=1)} ={\sigma^2}y
\end{align}

Note that the conditional distribution $p(m|x=+1)$ and $p(m|x=-1)$ are Gaussian distribution with mean value $\pm2/\sigma^2$ respectively and variance $4/\sigma^2$. Therefore the unconditioned distribution of LLR is a bimodal  Gaussian distribution. We first quantize channel LLRs $m$ uniformly with a large quantization level. This discrete distribution will be the input of the quantizer design algorithm in Section \ref{QuantizerDesign}, which produces the minimum distortion quantizer. The uniform quantizer concatenated with the minimum distortion quantizer yields the LLR quantizer in Fig. \ref{SystemModel}.

With the real number $m$ as input, the LLR quantizer produces its quantized value $\tilde{m}$. Finally, the quantized symbol vector $\tilde{m}_1^N$ is fed to the quantized decoder, which yields the estimation of the message vector $\hat{u}_{\mathcal{A}}$. 

 In this letter, we fix the uniform quantization level to 128 through massive simulation and find it yields best trade-off between decoding performance and computational complexity. We keep the quantization level for the channel LLR compression quantizer and the decoder quantizers the same in simulations. We use $n$-bit quantization to represent the quantization level is $2^n$ in the rest of this paper. 
\begin{figure}[t!]
	\centering
	\includegraphics[width=.4\textwidth]{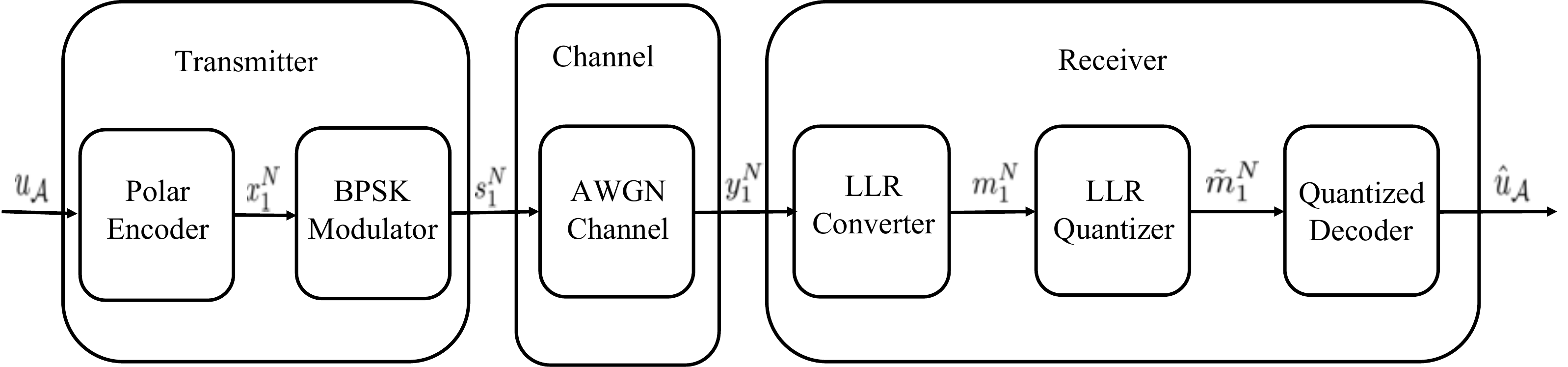}
	\caption{System model in this letter.} 
	\label{SystemModel} 
	\vspace{-0.5cm}
\end{figure}
\vspace{-0.3cm}
\section{Quantized Decoding For Polar Codes}\label{Body}
In this section, we first describe the process of quantized density evolution and the necessity of quantizers. Subsequently, we delineate the proposed minimum distortion quantizer. Utimately, we briefly describe the quantized SC  and SCL decoding with the minimum distortion quantizers.
\vspace{-0.5cm}
\subsection{Quantized Density Evolution}\label{QuantizedDensityEvolution}
Denote the quantized channel output LLRs and internal LLRs as $\tilde{\alpha}_{0, i}$ and $\tilde{\alpha}_{i, j}$ respectively. From the recursive structure of polar codes, one can find that
\begin{align}
	\tilde{\alpha}_{i, j} = \mathcal{Q}_f(f(\tilde{\alpha}_{i-1, j}, \tilde{\alpha}_{i-1, j+N/2^i}))
\end{align}
\begin{align}
	\tilde{\alpha}_{i, j+N/2^i} = \mathcal{Q}_g(g(\tilde{\alpha}_{i-1, j}, \tilde{\alpha}_{i-1, j+N/2^i}, \hat{u}_{i, j}))
\end{align}
where $\mathcal{Q}_f$ and $\mathcal{Q}_g$ represent the deterministic quantizers taking the place of the float-point $f$ and $g$ arithmatic. Because those LLRs from the AWGN channel, i.e. $\alpha_{0,j}$, are independent Gaussian random variables, the internal LLRs at each decoding stage remain independent due to the recursive structure of polar codes. Therefore,  the distribution of  $\tilde{\alpha}_{i,j}$ and $\tilde{\alpha}_{i, j+N/2^i}$ can be computed through  combination
\begin{small}
\begin{align}\label{QuantizedDensityF}
P(\tilde{\alpha}_{i, j} = e) = \sum_{f(n, k)=e} P(\tilde{\alpha}_{i-1, j} = n)P(\tilde{\alpha}_{i-1, j+N/2^i} = k)
\end{align} 
\end{small}
\begin{small}
\begin{align}\label{QuantizedDensityG}
\begin{split}
P(\tilde{\alpha}_{i, j+N/2^i} = e) = \sum_{g(n, k, \hat{u}_{i, j})=e} P(\hat{u}_{i, j})P(\tilde{\alpha}_{i-1, j} = n)\\P(\tilde{\alpha}_{i-1, j+N/2^i} = k)
\end{split}
\end{align}
\end{small}
where $e\in\mathbb{R}$ is the output of $f$ or $g$ function evalution with the quantized values $n,k\in\mathbb{R}$ as input. Because different inputs may produce the same output for $f$ or $g$ function evaluation, the summation in (\ref{QuantizedDensityF}) and (\ref{QuantizedDensityG}) is necessary. Since the source bits are equiprobable, it is easy to verify that $P(\hat{u}_{i, j}=0)=P(\hat{u}_{i,j}=1)=1/2$. It is obvious that the alphabet size of $\tilde{\alpha}_{i, j}$ will be no less than that of $\tilde{\alpha}_{i, j}$ or $\tilde{\alpha}_{i, j+N/2^i}$. Therefore, if we directly apply (\ref{QuantizedDensityF}) and  (\ref{QuantizedDensityG}) in density evolution without quantization, the alphabet size of the variables after polarization will grow exponentially with respect to the code length, making the density evolution intractable. Hence, we need to design proper quantization scheme to compress the alphabets of $\tilde{\alpha}_{i, j}$ and $\tilde{\alpha}_{i, j+N/2^i}$ so that the quantized density evolution is tractable,  making the probability distributions that are indispensable for the quantizer design algorithm in Section \ref{QuantizerDesign} available as well.
\subsection{Minimum Distortion Quantizer}\label{MMSEQuantizer}
Denote the internal LLR which we want to compress as $L$ with density $p(L)$. Furthermore, we assign a representative value $l_i$ for the $i$-th symbol in the alphabet of $L$. Denote the quantized variable as $T$ with distribution $p(T)$. Likewise, a reconstruction value $t_i$  is assigned to the $i$-th symbol in the alphabet of $T$. Let the alphabet size of $L$ be $M$ and that of $T$ be $K$, which means the quantization level is $K$. The objective of the quantizer $Q$ is to minimize the distortion
\begin{align}
	D = \sum_{k=1}^{K}\sum_{l\in \mathcal{A}(k)}(l-t_k)^2p(l)
\end{align}
where $\mathcal{A}(k)$ is the preimage $(\mathcal{Q}^{-1}(k))$ of the $k$-th quantizer output $t_k$. The set $\mathcal{A}(k)$ and $A(k')$ are disjoint for $k \neq k'$ and the union of all preimages is the alphabet of $L$. The set of all preimages $\{\mathcal{A}(k)\}$ forms a partition for the alphabet of $L$. The optimal quantizer $Q^*$ is obtained through
\begin{align}\label{Obj}
	Q^* = \arg\min_{\{\mathcal{A}(k)\}, \bm{t}} D(\{\mathcal{A}(k)\}, \bm{t})
\end{align}
where $\bm{t} = [t_1,t_2,...,t_K]\in\mathbb{R}^K$. 

When $\bm{t}$ is fixed, the problem is a combinatorial optimization w.r.t. the partition $\{\mathcal{A}(k)\}$ and however, when the partition is fixed, it becomes a convex optimization w.r.t. $\bm{t}$. This observation inspires us to design an efficient dynamic programming algorithm to solve the nontrivial optimization problem (\ref{Obj}).
\subsubsection{Optimal Reconstruction Value}\label{OptQuanta}
Consider the $k$-th quantizer output, if $\mathcal{A}(k)$ is fixed, taking the derivative of $D$ with respect to $t_k$ and set it to zero yield the optimal reconstruction value $t_k^*$
\begin{align}\label{masscenter}
	t_k^* = \frac{\sum_{l\in \mathcal{A}(k)}lp(l)}{\sum_{l\in \mathcal{A}(k)}p(l)}
\end{align}
\subsubsection{Optimal Partition}
As described in Section \ref{OptQuanta}, the necessary condition for the minimum distortion quantizer is that the optimal reconstruction level $t_k$ is computed through (\ref{masscenter}). Thus, the problem for finding the optimal quantizer reduces to finding an optimal partition and then determine each reconstruction level by (\ref{masscenter}) so that the total distortion $D$ is minimized.  The following lemma provides a prerequisite for the optimality of a partition.
\begin{lemma}\label{Lemma1}
	 If the representative values of $L$ follows the ascending order
	\begin{align}\label{order_y}
	l_1 < l_2 < ... < l_M
	\end{align}
	There exists an optimal derministic quantizer $Q^*$ for $L$ so that each $\mathcal{A}(k)$ is a contiguous set of integers 
	\begin{align*}
		\mathcal{A}(k) = \{a_{k-1}+1, a_{k-1}+2, ...., a_{k}\}
	\end{align*}
\end{lemma}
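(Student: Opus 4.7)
My plan is to prove the lemma by an exchange/rearrangement argument: take any optimal quantizer and show that reassigning each representative value to the nearest reconstruction level yields another quantizer that is at least as good and has contiguous preimages.

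First, I would let $Q^\star$ be any optimal quantizer with partition $\{\mathcal{A}(k)\}$ and reconstruction values $\{t_k\}$. After a harmless relabeling of output indices (a relabeling leaves the distortion unchanged), I can assume
\begin{align*}
  t_1 \le t_2 \le \cdots \le t_K.
\end{align*}
Next I would construct a new quantizer $Q'$ using the \emph{same} reconstruction values but with the nearest-neighbour assignment rule
\begin{align*}
  Q'(l_i) \;=\; \arg\min_{1\le k\le K}\,(l_i-t_k)^2,
\end{align*}
breaking ties in favour of the smallest index.  Because $Q'$ chooses, for every $l_i$, the output that minimises the per-symbol squared error against this fixed set of reconstruction values, a term-by-term inequality gives
\begin{align*}
  p(l_i)\bigl(l_i - t_{Q'(l_i)}\bigr)^2 \;\le\; p(l_i)\bigl(l_i - t_{Q^\star(l_i)}\bigr)^2 ,
\end{align*}
and summing over $i$ yields $D(Q') \le D(Q^\star)$.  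Since $Q^\star$ is optimal the inequality must be an equality, so $Q'$ is itself an optimal quantizer (and one can then also re-optimise its reconstruction values via (\ref{masscenter}) without increasing $D$).

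It remains to show that the preimages $\mathcal{A}'(k) = \{i : Q'(l_i)=k\}$ are contiguous integer sets.  With $t_1\le\cdots\le t_K$, the midpoints $b_k = (t_k+t_{k+1})/2$ partition $\mathbb{R}$ into the nearest-neighbour cells $(-\infty,b_1],(b_1,b_2],\ldots,(b_{K-1},\infty)$; the index $k$ minimising $(l-t_k)^2$ is non-decreasing in $l$, with ties resolved by the chosen rule.  Combined with the hypothesis $l_1<l_2<\cdots<l_M$, this monotonicity means that as $i$ increases, $Q'(l_i)$ is non-decreasing in $i$, so each $\mathcal{A}'(k)$ is a (possibly empty) block of consecutive indices $\{a_{k-1}+1,\ldots,a_k\}$, which is exactly the conclusion.

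The only delicate point, and the place where I expect to have to be careful rather than truly stuck, is the treatment of ties between reconstruction values (where several $t_k$ coincide) and between an $l_i$ and a midpoint $b_k$: both are resolved by the fixed tie-breaking convention, which keeps the nearest-neighbour map well defined and monotone.  Everything else (the pointwise inequality, the relabeling, and the optional re-optimisation of $\bm t$ afterwards) is routine.
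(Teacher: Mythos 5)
Your argument is correct and self-contained, which is notable because the paper itself does not prove this lemma---it simply defers to the appendix of the Fisher reference. You supply the standard nearest-neighbour (Lloyd-type) exchange proof: start from an arbitrary optimal $Q^\star$, keep its reconstruction levels, reassign each $l_i$ to the closest $t_k$, and observe that this can only decrease the per-symbol squared error, so the reassigned $Q'$ is also optimal. Combining the monotone ordering $t_1\le\cdots\le t_K$ with the hypothesis $l_1<\cdots<l_M$ then forces $Q'$ to be a non-decreasing step function in $i$, hence the preimages are contiguous blocks. The two fine points you flag (coincident $t_k$'s and ties at midpoints) are indeed the only places that need care, and your fixed tie-break resolves both: when $t_k=t_{k+1}$ the higher index simply receives an empty block, which is still of the required form with $a_k=a_{k+1}$.

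One small remark. Your proof establishes the existential claim of the lemma exactly as stated, but it does not by itself show that the preimages can be taken non-empty, which is implicitly assumed by the strict inequalities $a_0=0<a_1^*<\cdots<a_K=M$ used in the subsequent DP. This is not a gap in your proof of the lemma; it is a separate (easy) observation that when $K\le M$ and the $l_i$ are distinct, an optimal quantizer never wastes a level, which the paper also handles implicitly by its choice of search ranges in Algorithm~1. Since the paper gives no proof to compare against, the most useful thing to say is that your argument is the natural and essentially canonical one for this fact, and it would be a reasonable substitute for the external citation.
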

\begin{proof}
	See appendix of \cite{Fisher}.
\end{proof}
The optimization then reduces to find the optimal upper boundary of each $\mathcal{A}(k)$ which satisfies $a_0 = 0 < a_1^* < a_2^* < ... < a_K = M$.
 
  Note that the inequality in Lemma \ref{Lemma1} must be strictly held. If there exist two symbols with the same representative value, namely $l_i = l_{i+1}$, we should first merge these two symbols into a new symbol $l'$  with  representative value $l_i$ and probability $p(l') = p(l_i) + p(l_{i+1})$. The alphabet size of the merged distribution now becomes $M-1$. Without loss of generality, we assume that the representative values always satisfy (\ref{order_y}) in the rest of this letter. 

\subsubsection{Partial Distortion}
We define the partial distortion of the $k$-th quantizer output as the minimum distortion if its preimage in the source symbols ranging from $a_{k-1}+1$ to $a_k$
\begin{align}
	c(a_{k-1}+1\rightarrow a_k) = \sum_{l=a_{k-1}+1}^{a_k}(l-t_k^*)^2p(l)
\end{align}
where, $t_k^*$ is determined via (\ref{masscenter}).
\subsubsection{Quantizer Design  Algorithm}\label{QuantizerDesign}
With all the definition and lemma, we now in a position to provide our quantizer design algorithm for minimizing distortion, which is an instance of dynamic programming. We define the state variable as $S_k(a_k)$, which means the distortion of the optimal quantization of the source symbols 1 to $a_k$ to quantizer output 1 to $k$. It is provable that the following recursive formula holds
\begin{align}\label{Sk_a}
	S_k(a) = \min_{a'} \ \{S_{k-1}(a') + c(a'\rightarrow a)\}
\end{align}
where $a'\in\{k-1, k,...., a-1\}$. Once we obtain $S_K(M)$, the forward computation is accomplished and we find the optimal partition through backtracing subsequently. The quantizer design algorithm is described in detail in Algorithm 1 and we also provide a graphic illustration in the case of $M=5$ and $K=3$ in Fig. \ref{Trellis} for the readers to grasp the idea of our algorithm. What' more, the following theorem guarantees the optimality of the proposed method.
\begin{figure}[t!]
	\centering
	\includegraphics[width=.4\textwidth]{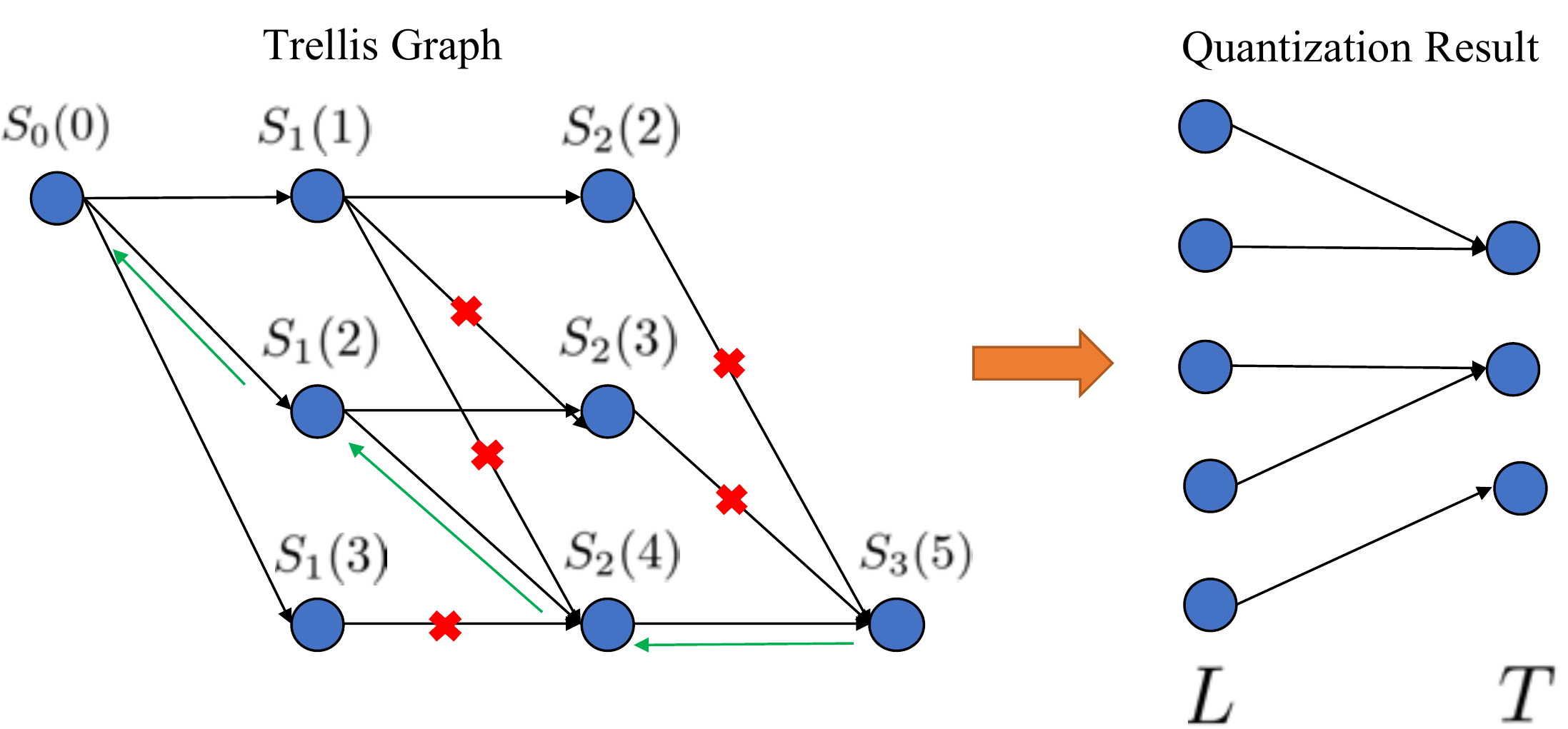}
	\caption{Illustration of the proposed minimum distortion quantization algorithm for $M=5, K=3$. The path consists of green arrows is the result of backtracing. The red crosses represent the discarded paths in the forward computation.} 
	\label{Trellis} 
	\vspace{-0.6cm}
\end{figure}
\begin{thm}
	Given a discrete probability distribution with the representative values satisfying (\ref{order_y}), Algorithm 1 produces the optimal upper boundaries $a_0 = 0 < a_1^* < a_2^* < ... < a_K = M$ and reconstrction value set $T_Q=\{t_1^*,...,t_K^*\}$.
\end{thm}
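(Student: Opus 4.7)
The plan is to prove correctness of the dynamic program by combining Lemma~\ref{Lemma1} (reduction to contiguous partitions), the closed form (\ref{masscenter}) (optimal reconstruction given a partition), and a standard optimal-substructure argument for the recursion (\ref{Sk_a}). By Lemma~\ref{Lemma1}, there exists an optimal quantizer whose preimages are contiguous, so the search can be restricted to choosing boundaries $0 = a_0 < a_1 < \cdots < a_K = M$. For any such boundary sequence, the distortion separates additively as $D = \sum_{k=1}^{K} c(a_{k-1}+1 \rightarrow a_k)$, where each partial distortion is already minimized over the reconstruction value via (\ref{masscenter}) since $c$ is defined using $t_k^{*}$. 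Hence the whole problem reduces to minimizing this additive cost over admissible boundary sequences.

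I would then prove by induction on $k$ that $S_k(a)$, as defined by (\ref{Sk_a}), equals the minimum of $\sum_{j=1}^{k} c(a_{j-1}+1 \rightarrow a_j)$ over all choices $0 = a_0 < a_1 < \cdots < a_k = a$. The base case $k=1$ is immediate from $S_1(a) = c(1 \rightarrow a)$. For the inductive step, fix any optimal $k$-output quantizer for symbols $1$ through $a$ and let $a_{k-1}^{\star}$ denote its last interior boundary. The exchange argument is the key point: if the prefix $\{a_0,\dots,a_{k-1}^{\star}\}$ were not itself optimal among $(k-1)$-output quantizers for symbols $1$ through $a_{k-1}^{\star}$, one could substitute the better prefix while leaving $c(a_{k-1}^{\star}+1 \rightarrow a)$ unchanged (since $c$ depends only on the endpoints of the last interval and the source distribution on it), strictly decreasing $D$ and contradicting optimality. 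Thus the prefix cost equals $S_{k-1}(a_{k-1}^{\star})$ by the induction hypothesis, and the total optimum is $\min_{a'} \{S_{k-1}(a') + c(a' \rightarrow a)\}$, which is exactly (\ref{Sk_a}).

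Taking $k = K$ and $a = M$ then shows that $S_K(M)$ equals the minimum achievable distortion, and the backtracing step in Algorithm~1 recovers a sequence of boundaries $a_0 = 0 < a_1^{*} < \cdots < a_K = M$ achieving it. Finally, the reconstruction values $t_k^{*}$ returned by the algorithm are computed from (\ref{masscenter}) applied to the recovered preimages $\mathcal{A}(k) = \{a_{k-1}^{*}+1, \dots, a_k^{*}\}$, which is optimal by the derivation in Section~\ref{OptQuanta}. Combining the optimality of the partition with the optimality of the reconstruction completes the proof.

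I expect the exchange argument in the inductive step to be the main obstacle to state cleanly. It relies on the fact that $c(a_{k-1}+1 \rightarrow a_k)$ depends only on $p(l)$ restricted to indices in $\{a_{k-1}+1,\dots,a_k\}$ and on the single reconstruction $t_k^{*}$ chosen for that interval via (\ref{masscenter}), so swapping a different, lower-cost prefix in does not affect the last term. Once this separability is noted, the rest of the argument is routine dynamic-programming bookkeeping.
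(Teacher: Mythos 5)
Your proposal is correct and follows essentially the same approach as the paper's proof: both rely on Lemma~\ref{Lemma1} to restrict attention to contiguous partitions, observe that the distortion is additively separable into partial distortions $c(a_{k-1}+1 \rightarrow a_k)$ with the per-interval optimal reconstruction already baked in via (\ref{masscenter}), and establish the recursion (\ref{Sk_a}) through an exchange argument showing optimal substructure. The only difference is presentational: you package the argument as an explicit induction on $k$ with a stated base case $S_1(a)=c(1\rightarrow a)$, whereas the paper states the separability chain of equalities and then invokes Bellman's principle of optimality after noting the optimal-substructure and overlapping-subproblems properties. The substance --- including the key observation that swapping in a cheaper prefix leaves the last term $c(a_{k-1}+1\rightarrow a)$ unchanged because it depends only on the last interval --- is identical.
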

\begin{proof}
	We first emphasis that Algorithm 1 is a sequential decision process based on dynamic programming in essence and we have precomputed and stored the partial distortion and the optimal reconstruction value for all valid quantization scheme for a quantizer output (Line 3 to Line 8). With the aid of Lemma \ref{Lemma1} and the definition of state variables as welle as partial distortion, we have
	\begin{align*}
		&S_k(a_k) = \min_{a_1\sim a_k} \sum_{n=1}^{k}\sum_{l = a_{n-1}+1}^{a_n}(l-t_n)^2p(l)\\
					 &= \min_{a_k}\{\min_{a_1\sim a_{k-1}}  \sum_{n=1}^{k-1}\sum_{l = a_{n-1}+1}^{a_n}(l-t_n)^2p(l) + c(a_{k-1} \rightarrow a_k)\}\\
					 & = \min_{a_k} \{S_{k-1}(a_{k-1}) + c(a_{k-1} \rightarrow a_k)\} 
	\end{align*}
	which proves (\ref{Sk_a}). The second equation utilizes the separable property of the objective function. Note that for fixed $a_{n-1}$ and $a_{n}$, $t_n$ is obtained through (\ref{masscenter}) and most importantly, it is computed in advance and stored. Therefore, for a fixed set of upper boundaries, the quantized output values always minimize the squared error.  Assuming $a_1\sim a_k$ is the optimal upper boundaries when quantizing the source symbols $1\sim a_k$ to quantizer output $1\sim k$, we claim that $a_1\sim a_{k-1}$ is also the optimal upper boundaries for quantizing the source symbols $1\sim a_{k-1}$. Otherwise, we are capable of replacing $a_1\sim a_{k-1}$ with the optimal one, yielding a smaller squared error. This contradicts the assumption and therefore we show that the problem has \textit{optimal substructure}. It is also important to note that obtaining the optimal quantization of source symbols $1\sim a_k$ necessitates computing the optimal quantization of source symbols $1\sim a_{k-1}$ for various choices of $a_{k-1}$. Consequently, the problem has \textit{overlapping subproblems}. Since Algorithm 1 is an instance of dynamic programming,  according to Bellman's principal of optimality\cite{BellmanPrincipalOpt}, it is guaranteed to find the optimal solution. 
\end{proof}
%
\vspace{-0.5cm}
\begin{algorithm}\label{algo}
	\caption{Minimum Distortion Quantizer}
	\label{hhsa}
	\begin{algorithmic}[1]  
		\Require Input density $p(L)$, representative value set $R_L=\{l_1,...,l_M\}$, quantization level $K$.
		\Ensure  Optimal upper boundaries $a^*_0=0, a^*_1, a^*_2,...,a^*_K=K$ and reconstrction value set $T_Q=\{t_1^*,...,t_K^*\}$. 
		\State $S_0(0) = 0$, $a^*_0 = 0$
		\State /*Precompute all valid partial distortion*/
		\For {$a'\in \{1, 2, ..., M\}$}
			\For {$a \in \{a', a'+1, ..., \min\{a'+M-K, M\}\}$}
				\State $t^* = \frac{\sum_{l=a'}^{a}lp(l)}{\sum_{l=a'}^{a}p(l)}$
				\State $c(a'\rightarrow a) = \sum_{l=a'}^{a}(l-t^*)^2p(l)$
			\EndFor 
		\EndFor
		\State /*Forward computation*/
		\For {$z\in\{1, 2,...,K\}$}
			\For {$a\in \{z, z+1, ..., z+M-K\}$}
				\State $S_z(a) = \min_{a'} \ \{S_{z-1}(a') + c(a'\rightarrow a)\}$ 
				\State $d_z(a) = \arg\min_{a'} S_{z-1}(a') + c(a'\rightarrow a)$
			\EndFor
		\EndFor
		\State /*Backtracing*/
		\State $a^*_K=M$
		\For {$z \in \{K - 1, K-2, ..., 1\}$}
			\State  $a^*_z = d_{z+1}(a^*_{z+1})$
		\EndFor
	\end{algorithmic}
\end{algorithm}
\subsection{Quantized Decoding}\label{QDecode}
Given a polar code $\mathcal{P}(N,K)$ and distribution of the quantized channel LLR, we first perform the quantized density evolution as described in Section \ref{QuantizedDensityEvolution} during which we design the quantizers for replacing the $f$ and $g$ function in the conventional float-point decoder. For fixed code length, a set of quantizers for quantized decoding is designed and they are utilized in varying code rate.

 The precomputed quantizers and the correspond optimal reconstruction value set are utilized in SC and SCL decoding. Note that all quantizers can be implemented as lookup tables consist of unsigned integers and therefore all messages during decoding are simply unsigned integers as well. It is also worth emphasizing that the precomputed and stored reconstruction values of the quantized LLR symbol received by the leaf node in the decoding tree should be fetched for bit estimation (\ref{SCDecision}) or PM update (\ref{PM-HWF}).
 \vspace{-0.3cm}
\section{Simulation Results}\label{Simulation}
In this section we provide some simulation results on the error correction performance of the quantized decoders and their float-point counterparts. Performance comparison between the proposed nonuniform quantized decoder and the uniform quantized decoder\cite{UniformQuantizationSCL} is illustrated to show the superiority of the proposed scheme. Furthermore, similar to \cite{UniformQuantizationSCL}, we utilize GA to obtain the approximate distribution of internal LLRs and apply Lloyd algorithm to design quantizers, yielding another quantized decoder, and its performance is shown as well. The design metric for the all quantized decoders is to minimize squared error for fair comparison. We provide the results of  $\mathcal{P}(256, 128)$ and $\mathcal{P}(512, 128)$ as an example and similar results can be observed under other code parameters. It is important to note that all quantized decoders are generated with design $E_b/N_0=0$ dB for fair comparison.

Several interesting conclusions can be drawn from Fig. \ref{SCL-(128, 64)}. First but most important, one can immediately find out that the error correction performance of the proposed quantized decoder exceeds that of the uniform quantized decoder or the quantized decoder based on Lloyd algorithm by a large margin. For instance, approximately 0.5 dB performance gain are achievable for SC decoding with nonuniform quantized decoder under 4-bit quantization at block error rate (BLER) $10^{-2}$ and it is even better than 5-bit uniform quantized decoder or the quantized decoder with Lloyd algorithm. We conjecture that the performance degradation is due to GA itself may not provide a precise approximation for the distribution of internal LLRs, leading to a degraded performance, whereas the proposed framework does not rely on any prior assumption on the distribution of internal LLRs and all performance loss comes from quantization.  

Secondly,  from Fig. \ref{SCL-(128, 64)}, we find that the proposed quantized decoders has near float-point error correction performance compared to their float-point counterparts under 5-bit quantization. The BLER performance loss of the quantized SC decoder is neglectable at BLER $10^{-2}$ and the gap between the quantized SCL decoder and the float-point decoder is about 0.1 dB at BLER $10^{-2}$. We speculate the reason for more performance degradation is that SCL decoding relies on the accurate PM value to identify reliable decoding path. The quantized LLR decreases the resolution of PM values, making it hard to identify correct decoding path. Similar results are observed in Fig. \ref{SCL-(256, 128)} as well. 

Addtionally, when we compare Fig. \ref{SCL-(128, 64)} and Fig. \ref{SCL-(256, 128)}, one may immediately find that the quantized decoder for longer polar codes suffer from more performance loss than that for shorter polar codes. This observation is not surprising since the quantized density evolution is performed in a stage-by-stage fashion.  Quantization in a given stage will incur some distortion. Since longer polar codes correspond to more stages, more distortion will be accumulated, leading to more performance degradation. However, for current 5G downlink control channel, the maximum mother code length is 512\cite{3GPP_TS38212}, making 5-bit nonuniform quantized decoder practical in hardware implementation. 

Finally, when it comes to the robustness of the proposed quantized decoder, since it is generated with a specific design $E_b/N_0$, its robustness to varing signal-to-noise ratios can be verified by the presented simulation results. What's more, one can find that it works properly under different code length and code rate, showing its robustness to different code parameters. 
\begin{figure}[t!]
	\centering
	\includegraphics[width=.4\textwidth]{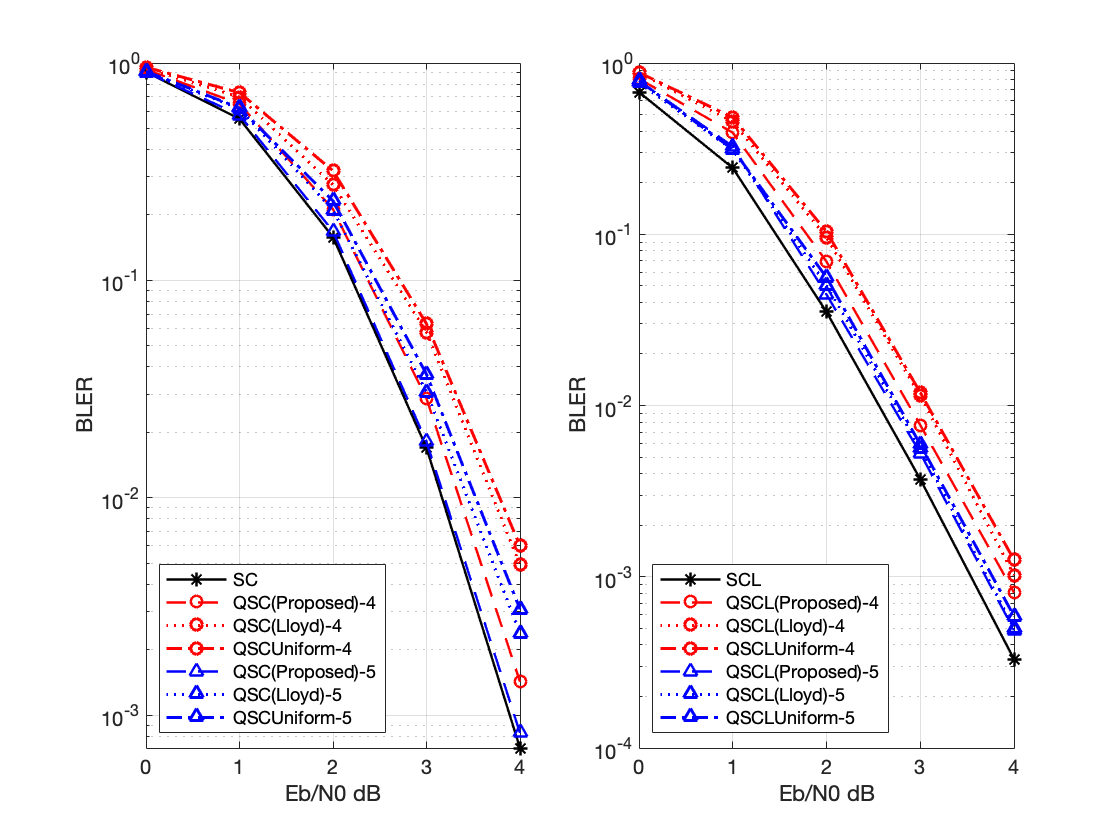}
	\caption{BLER of the proposed quantized decoders and the float-point decoders for $\mathcal{P}(256, 128)$. List size for SCL decoding is 8.} 
	\label{SCL-(128, 64)} 
	\vspace{-0.6cm}
\end{figure}
\vspace{-0.2cm}
\begin{figure}[t!]
	\centering
	\includegraphics[width=.4\textwidth]{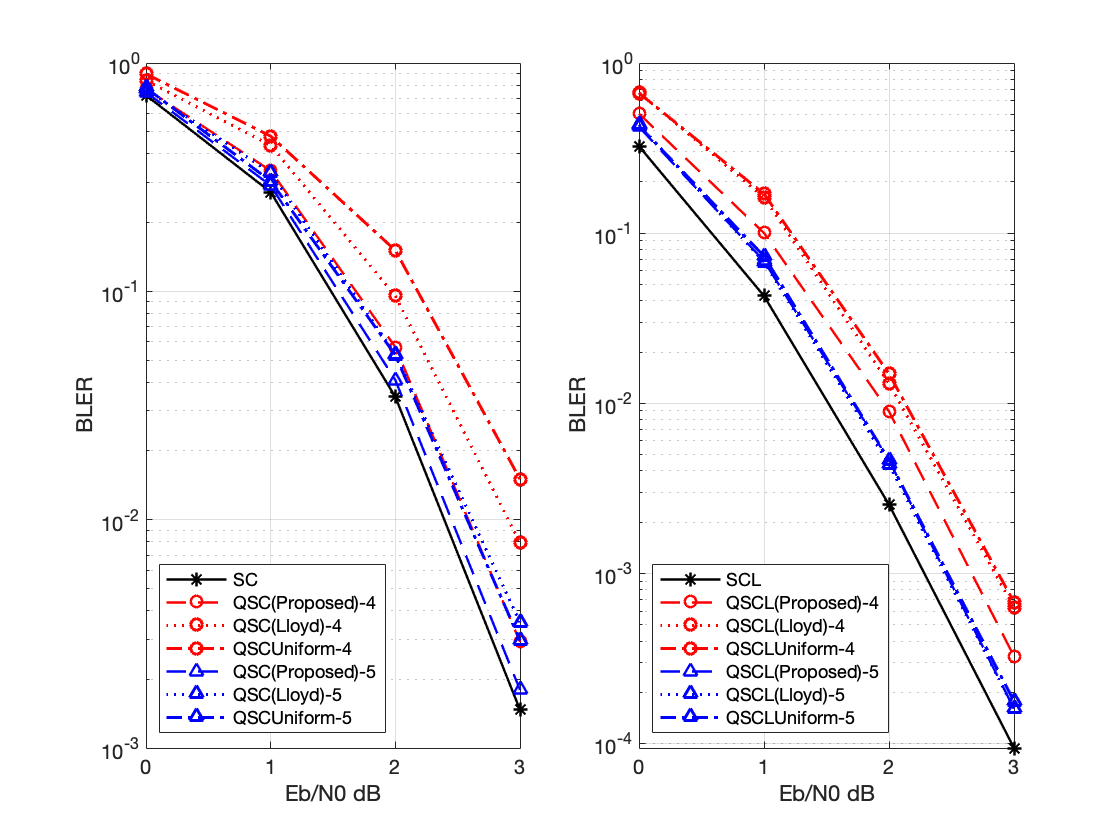}
	\caption{BLER of the proposed quantized decoders and float-point decodersr for $\mathcal{P}(512, 128)$. List size for SCL decoding is 8.} 
	\label{SCL-(256, 128)} 
	\vspace{-0.6cm}
\end{figure}

\section{Conclusion}\label{conclusion}
In this letter, we propose a nonuniform quantized decoder for polar codes with minimum distortion quantizers. We obtain the nonuniform quantizer through dynamic programming. Simulation results manifest that the proposed quantized decoder has approaching error correction performance compared to its float-point counterpart under 5-bit nonuniform quantization. Last but not least, the proposed scheme outperformed the uniform quantized decoder and the quantized decoder based on Lloyd algorithm significantly with respect to the error correction performance under the same quantization resolution. 
\section*{Acknowledgements}
We thanks Dr. Xing Yang and other researchers from Wireless Terminal Chipset Algorithm Development Dept, Hisilicon, HUAWEI TECHNOLOGIES CO., LTD for their insightful suggestions. 
\bibliographystyle{IEEEtran} 
\bibliography{references}
\end{document}